\documentclass[draftclsnofoot, onecolumn, 11pt, two-side]{IEEEtran}
% If the IEEEtran.cls has not been installed into the LaTeX system files,
% manually specify the path to it:
% \documentclass[conference]{../sty/IEEEtran}
\usepackage{times,amsmath,color,amssymb,graphicx,epsfig,psfrag,enumerate,amsthm,multicol,subfigure}
\usepackage{cite}
\usepackage[square,comma,numbers]{natbib}
%\usepackage{setspace}
%\doublespacing
\usepackage[top=0.8in,bottom=0.8in,left=0.8in,right=0.8in]{geometry}
\theoremstyle{plain}
\newtheorem{Theorem}{Theorem}
\begin{document}

% paper title
\title{Compress-and-Forward Performance in Low-SNR Relay Channels}

\author{Lili~Zhang, %~\IEEEmembership{Student Member,~IEEE,}
        Jinhua~Jiang, %~\IEEEmembership{Member,~IEEE,}
        %Andrea J. Goldsmith,
        and~Shuguang~Cui%,~\IEEEmembership{Member,~IEEE}% <-this % stops a space
\thanks{L. Zhang and S. Cui are with the Department
of Electrical and Computer Engineering, Texas A$\&$M University,
College Station, TX 77843 USA
(e-mail: lily.zhang@tamu.edu; cui@ece.tamu.edu).}
% <-this % stops a space
\thanks{J. Jiang is with the Department of Electrical Engineering, Stanford University, Stanford, CA 94305 USA (email: jhjiang@stanford.edu).}
% <-this % stops a space
}
% author names and affiliations
% use a multiple column layout for up to three different
% affiliations
%\author{\IEEEauthorblockN{Lili Zhang}
%\IEEEauthorblockA{Department of ECE\\
%Texas A\&M University\\
%College Station, TX 77843, USA\\
%Email: lily.zhang@tamu.edu } \\\and \IEEEauthorblockN{Jinhua
%Jiang}
%\IEEEauthorblockA{Department of EE\\ Stanford University\\
%Stanford, CA 94305, USA\\
%Email: jhjiang@stanford.edu }\\ \and \IEEEauthorblockN{Shuguang
%Cui}
%\IEEEauthorblockA{Department of ECE\\ Texas A\&M University\\
%College Station, TX 77843, USA\\
%Email: cui@ece.tamu.edu }
%}

% make the title area
\maketitle

\begin{abstract}
In this paper, we study the Gaussian relay channels in the low signal-to-noise ratio (SNR) regime with the time-sharing compress-and-forward (CF) scheme, where at each time slot all the nodes keep silent at the first fraction of time and then transmit with CF at a higher peak power in the second fraction. Such a silent vs. active two-phase relay scheme is preferable in the low-SNR regime. With this setup, the upper and lower bounds on the minimum energy per bit required over the relay channel are established under both full-duplex and half-duplex relaying modes. In particular, the lower bound is derived by applying the max-flow min-cut capacity theorem; the upper bound is established with the aforementioned time-sharing CF scheme, and is further minimized by letting the active phase fraction decrease to zero at the same rate as the SNR value. Numerical results are presented to validate the theoretical results.
\end{abstract}

\section{Introduction} \label{sec_1}

The relay channel problem was first studied by van der Meulen \cite{van_der}, where the source transmits a message to the destination with the help of a relay node. In \cite{cover}, the authors formalized the problem and proposed several effective full-duplex relay strategies, which are the so-called decode-and-forward (DF) and compress-and-forward (CF) schemes. Subsequently, the corresponding schemes in the time division half-duplex (TDD) mode were studied in \cite{anders}. The upper bound on the capacity and various achievable rates were established in \cite{cover} and \cite{anders}, while the general capacity problem is still open.

One of the important applications of relaying is in low-power communication networks (such as wireless sensor networks), where the transmitting SNRs are usually low such that relays are needed to enhance the performance at the destination nodes. In \cite{verdu}, the author presented two important performance criteria in the low-SNR regime: One is the minimum energy per bit required for reliable communication; and the other is the slope of the spectral efficiency. Such criteria have been adopted to study the relay channel in the low-SNR regime~\cite{gamal,xiaodong}. In particular, the time-sharing CF scheme and the traditional DF scheme were investigated in \cite{gamal} for both the full-duplex and frequency division half-duplex (FDD) modes, where the authors derived the DF-based upper bound on the minimum energy per bit, and the lower bound using the max-flow min-cut capacity theorem. In \cite{xiaodong}, the fading relay channel was studied, where the upper and lower bounds on the minimum energy per bit were derived using the same method as in \cite{gamal}.

In the low-SNR regime, the CF scheme is much more appealing than the DF scheme since the relay cannot decode much information such that the DF strategy at the relay would cause the bottleneck effect. In this paper, we thus focus on the time-sharing CF scheme in the low-SNR regime under both the full-duplex and half-duplex TDD modes. We show that the time-sharing CF scheme improves the achievable rate under the TDD mode when the SNR is low. Furthermore, the lower bound on the minimum energy per bit is derived by applying the max-flow min-cut theorem for the channel capacity, and the upper bound is derived by deploying the time-sharing CF scheme. The aforementioned upper bound can be further tightened by letting the active phase duration decrease to zero at the same rate as the SNR value. %We also show the performance improvement due to time-sharing by numerical results.

The rest of the paper is organized as follows. The channel model is discussed in Section \ref{sec_2}. In Section \ref{sec_3} we present the achievable rates of the time-sharing CF scheme under both the full-duplex and half-duplex TDD modes, and subsequently we derive the bounds on the minimum energy per bit for reliable communication. The numerical results are presented in Section \ref{sec_4}. Finally, the paper is concluded in Section \ref{sec_5}.

\section{System Model}\label{sec_2}
Consider the Gaussian relay channel as shown in Fig. \ref{fig_relay_channel_model}, where the channel amplitude gains are assumed fixed and denoted as $h_{21}$, $h_{31}$, and $h_{32}$, respectively. Next, we introduce the signal models for both the traditional relaying schemes (full-duplex and half-duplex TDD), and the time-sharing relaying scheme. Note that in this paper we only consider real signals since the extension to complex cases is straightforward.
%\begin{figure}[!t]
%\centering
%% for double column submission
%\includegraphics[width=3in]{./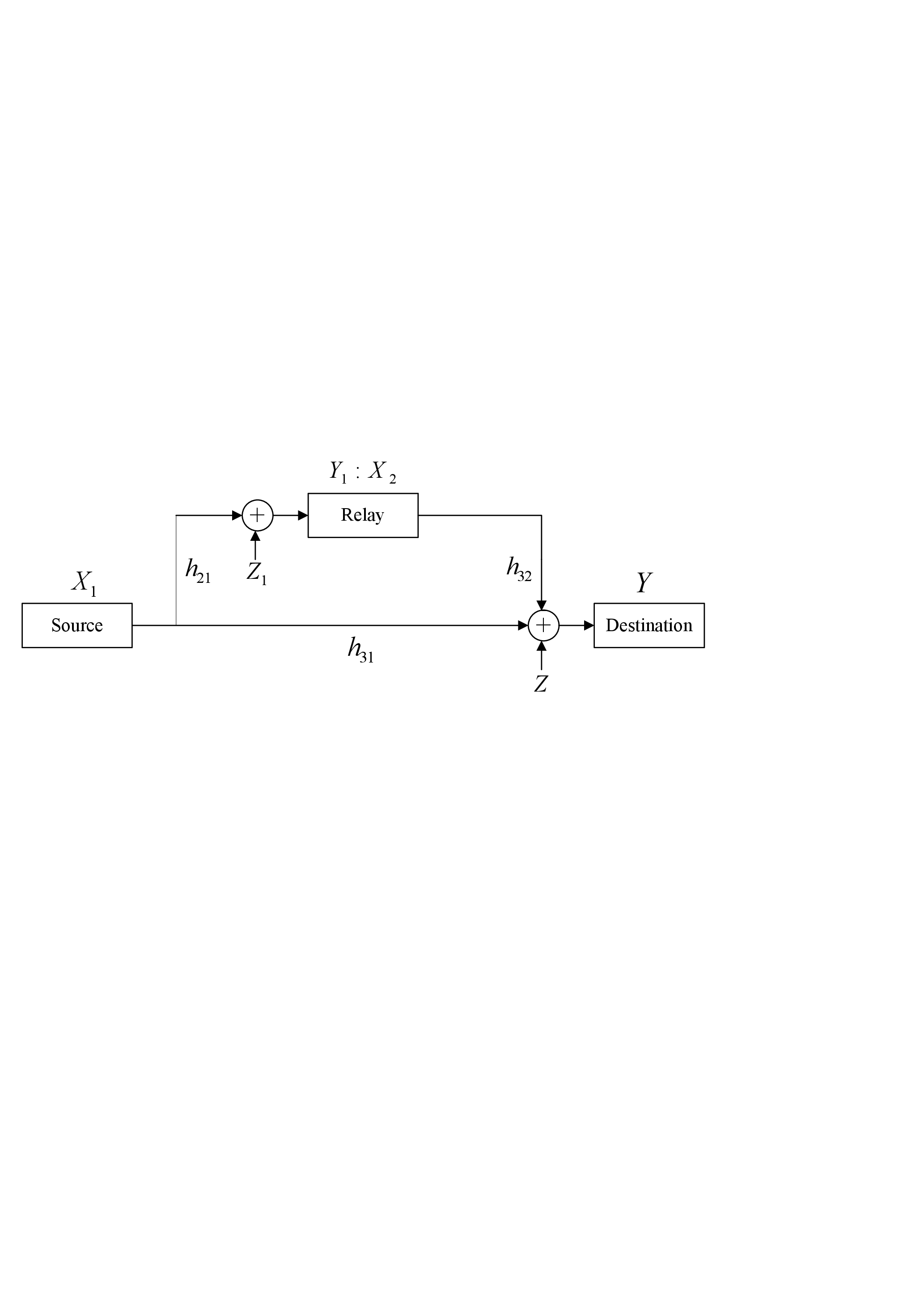}
%\caption{The Relay Channel Model.} \label{fig_relay_channel_model}
%\end{figure}

\subsection{Traditional Relaying}

\subsubsection{Full-Duplex Mode}
Suppose the relay can transmit and receive signals at the same time. The channel inputs at the source and relay are represented as $X_1$ and $X_2$ with average power values $P_1$ and $P_2$, respectively. The noises at the relay and destination are denoted by $Z_1$ and $Z$, which are independent additive white Gaussian noises (AWGNs) with distributions $\mathcal{N}(0,N_1)$ and $\mathcal{N}(0,N)$, respectively.

Denoting the received signals at the relay and destination by $Y_1$ and $Y$, respectively, we can fully describe the channel as:
\begin{eqnarray*}
Y_1 &=& h_{21}X_{1}+Z_1,\\
Y &=& h_{31}X_{1} + h_{32}X + Z.
\end{eqnarray*}

\subsubsection{Half-Duplex Mode}
We also consider the half-duplex TDD mode, where at each transmission slot the relay listens to the source during the first $\lambda$ fraction of time, and transmits to the destination in the remaining $1-\lambda$ fraction.

In the first $\lambda$ fraction, the transmitted signal at the source is denoted by $X_{1}^{(1)}$ with an average power constraint $P_{1}^{(1)}$. The received signals at the relay and destination are given by:
\begin{eqnarray*}
Y_{1} &=& h_{21}X_{1}^{(1)} + Z_{1},\\
Y^{(1)} &=& h_{31}X_{1}^{(1)} +
Z^{(1)},
\end{eqnarray*}
respectively, where $Z_{1} \sim \mathcal{N}(0,N_{1})$ and $Z^{(1)} \sim \mathcal{N}(0,N)$ are independent AWGNs.

In the remaining $1-\lambda$ fraction, the transmitted signals at the source and relay are denoted as $X_{1}^{(2)}$ and $X_{2}$,
with average power constraints $P_{1}^{(2)}$ and $P_{2}$, respectively. Accordingly, the received signal at the destination is given by:
\begin{eqnarray*}
%Y_{1}(i) &=& h_{21}X_{1}^{(1)}(i) + Z_{1}(i),\\
Y^{(2)} &=& h_{31}X_{1}^{(2)} +
h_{32}X_{2} + Z^{(2)},
\end{eqnarray*}
where $Z^{(2)} \sim \mathcal{N}(0,N)$ is the AWGN noise at the destination.

\subsection{Relaying with Time-Sharing}
It is known that for low-SNR point-to-point Gaussian channels, low-duty-cycle on-off keying is as good as Gaussian inputs~\cite{verdu}. Along a similar line, for the low-SNR relay channels, the authors in~\cite{gamal} proposed a time-sharing CF scheme based on the argument that the CF achievable rate is not a concave function over the power allocation factor, where the underlying idea is that the relay can ``hear" the source more clearly if we first deploy a silent phase in each slot and increase the peak power in the following active phase.

With the above discussions, in this paper we consider the time-sharing relaying scheme, where in each time slot all the nodes keep silent in the first $1-\alpha$ fraction (called the silent phase), and deploy the traditional relaying scheme in the remaining $\alpha$ fraction (called the active phase), which could be either full-duplex or half-duplex as introduced in the previous subsection. As such, the signal model in the active phase follows the previously introduced model except that the power at the source and relay should be scaled up accordingly to meet the average power constraint. For example, if full-duplex relaying is deployed, the power values at the source and relay become $P_1/\alpha$ and $P_2/\alpha$, respectively.

\section{Achievable Rates and Minimum Energy Per Bit}\label{sec_3}
In this section, the performance of the time-sharing CF scheme is investigated under both the full-duplex and half-duplex TDD modes. Specifically, we derive the achievable rates and the upper/lower bounds on the minimum energy per bit in the low-SNR regime. For convenience, we define the following function:
\[ \Gamma \left(x\right) =
\frac{1}{2} \log_{2} \left(1 + x\right),
\]
and denote the \emph{normalized channel gains} of the corresponding links
as:
\begin{equation}
\nonumber \gamma_{21} = \frac{h_{21}^{2}}{N_{1}},\quad \gamma_{32} =
\frac{h_{32}^{2}}{N}, \quad \gamma_{31} = \frac{h_{31}^{2}}{N}.
\end{equation}

\subsection{Capacity Upper Bound and Achievable Rates}
\subsubsection{Full-Duplex Mode}
We first give the upper bound on the capacity by applying the max-flow min-cut theorem \cite{cover}:
\begin{equation}
C^{+}_{\textrm{full}} = \max_{0\leq \rho_x \leq
1}\min\left\{C^{+}_{\textrm{1-full}}(\rho_x),
C^{+}_{\textrm{2-full}}(\rho_x)\right\}, \label{eq_full_upperbound}
\end{equation}
where $\rho_x$ is the correlation coefficient between the source and relay inputs, and
\begin{eqnarray}
\nonumber C^{+}_{\textrm{1-full}}(\rho_x) &=&
\Gamma\left(\gamma_{31}P_1 +
\gamma_{32}P_2\right), \\
\nonumber C^{+}_{\textrm{2-full}}(\rho_x) &=&
\Gamma\left(P_1(1-\rho_x^2)(\gamma_{21}+\gamma_{31}) \right).
\end{eqnarray}

Now we first consider the traditional CF strategy, where the relay compresses the received signal with Wyner-Ziv coding~\cite{wyner}, and then forwards the binning index to the destination. The following achievable rate is established with the CF scheme \cite{cover}:
\[
R_{\textrm{cf}} = \sup_{p(\cdot)\in\mathcal{P}^{*}} I(X_1;Y,\hat{Y}_1|X_2),
\]
subject to the constraint
\[
I(X_2;Y) \geq I(Y_1;\hat{Y}_1|X_2,Y).
\]
Correspondingly, the achievable rate with CF over the Gaussian relay channel can be written as:
\begin{equation}
R_{\textrm{cf}} = \Gamma \left(\gamma_{31}P_1 + \frac{\gamma_{32}\gamma_{21}P_1P_2}{1+\gamma_{21}P_1+\gamma_{31}P_1+\gamma_{32}P_2}\right).
\end{equation}
As we discussed before, to improve the performance in the low-SNR regime, the authors in~\cite{gamal} proposed the time-sharing CF scheme based on the argument that the achievable rate is not a concave function of the power. Specifically, the relay channel is only utilized for $\alpha$ portion of each time slot.  In this $\alpha$ portion, the power values of the source and relay are $P_1/\alpha$ and $P_2/\alpha$, respectively. Accordingly, the achievable rate for the time-sharing CF scheme is given as~\cite{gamal}:
\begin{equation}
R_{\textrm{ts}} = \max_{0<\alpha\leq 1} \alpha\Gamma \left(\frac{\gamma_{31}P_1}{\alpha} + \frac{\gamma_{32}\gamma_{21}P_1P_2}{\alpha^2+\alpha(\gamma_{21}P_1+\gamma_{31}P_1+\gamma_{32}P_2)}\right).
\end{equation}
It is also shown in \cite{gamal} that the above achievable rate is higher than the one with the traditional CF scheme \cite{cover} in the low-SNR regime.

\subsubsection{Half-Duplex Mode}
In practice, full-duplex relaying is technically difficult to implement. Hence, we now consider the more practical TDD mode. We first give the max-flow min-cut bound on the capacity \cite{anders}:
\begin{equation}
C^{+}_{\textrm{half}} = \max_{0 \leq \rho_{x} \leq 1} \min
\{C_{\textrm{1-half}}^{+}\left(\rho_{x}\right),
C_{\textrm{2-half}}^{+}\left(\rho_{x}\right)\},\label{eq_half_upperbound}
\end{equation}
where $\rho_x$ is the correlation coefficient between $X_1^{(2)}$ and $X_2$, and
\begin{eqnarray*}
C_{\textrm{1-half}}^{+}\left(\rho_{x}\right)&=& \lambda \Gamma
\left(\gamma_{31}P_1^{(1)}\right) + \\  &&\left(1-\lambda\right) \Gamma
\bigg(\gamma_{31}P_1^{(2)} + %\\&&
\gamma_{32}P_2 + 2\rho_{x}\sqrt{\gamma_{31}P_1^{(2)}\gamma_{32}P_2}\bigg); \nonumber \\
C_{\textrm{2-half}}^{+}\left(\rho_{x}\right)&=& \lambda \Gamma
\left(\left(\gamma_{21}+\gamma_{31}\right)P_1^{(1)}\right)+\nonumber %\\&&
\\&&\ \left(1-\lambda\right) \Gamma
\left(\left(1-\rho_x^2\right)\gamma_{31}P_1^{(2)}\right).
\end{eqnarray*}

Furthermore, the traditional CF achievable rate without time-sharing under the half-duplex TDD mode is given by \cite{anders}:
\begin{equation}
R_{\textrm{cf\_h}} = \lambda \Gamma \left(\gamma_{31}P_1^{(1)} + \frac{\gamma_{21}P_1^{(1)}}{1+N_{\textrm{w}}/N_1}\right) + (1-\lambda)\Gamma \left(\gamma_{31}P_1^{(2)}\right),
\end{equation}
where $N_{\textrm{w}}$ is the power of the quantization noise at the relay, with
\begin{equation}
\nonumber N_{\textrm{w}} = N_1\frac{1+\gamma_{21}P_1^{(1)}+\gamma_{31}P_1^{(1)}}{\left(1+\gamma_{31}P_1^{(1)}\right)\left(\left(1+\frac{\gamma_{32}P_2}{1+\gamma_{31}P_1^{(2)}}\right)^{\frac{1-\lambda}{\lambda}}-1\right)}.
\end{equation}

In the TDD mode, the above achievable rate function is not a concave function of the power either. Thus, by a similar argument as that in \cite{gamal}, we can apply time-sharing in the low-SNR regime to improve the achievable rate. In particular, the source and relay keep silent in the first $1-\alpha$ portion of each time slot; then the source sends the message to the relay and destination in the following $\alpha\lambda$ portion with power $P_1^{(1)}/\alpha$. The relay compresses the received signal and sends it to the destination over the last $\alpha(1-\lambda)$ fraction of time with power $P_2/\alpha$. The source also utilizes the last $\alpha(1-\lambda)$ portion to send signal to the destination with power $P_1^{(2)}/\alpha$. The achievable rate of the above time-sharing CF scheme under the half-duplex TDD mode can be computed as:
{%\setlength{\arraycolsep}{2pt}
\begin{equation}
\scriptsize\label{eq_half_cf_ts_rate}
\nonumber R_{\textrm{ts\_h}} = \max_{0<\alpha\leq 1}\quad \alpha\lambda \Gamma \left(\frac{\gamma_{31}P_1^{(1)}}{\alpha} + \frac{\gamma_{21}P_1^{(1)}}{\alpha+\alpha \frac{N_{\textrm{ts\_w}}}{N_1}}\right)   +\  \alpha(1-\lambda)\Gamma \left(\frac{\gamma_{31}P_1^{(2)}}{\alpha}\right),
\end{equation}}
where $N_{\textrm{ts\_w}}$ is the power of the quantization noise at the relay, with
\begin{equation}
\nonumber N_{\textrm{ts\_w}} = N_1\frac{\alpha+\gamma_{21}P_1^{(1)}+\gamma_{31}P_1^{(1)}}{\left(\alpha+\gamma_{31}P_1^{(1)}\right)\left(\left(1+\frac{\gamma_{32}P_2}{\alpha+\gamma_{31}P_1^{(2)}}\right)^{\frac{1-\lambda}{\lambda}}-1\right)}.
\end{equation}

The achievable rate improvement in the low-SNR regime under both the full-duplex and half-duplex modes will be further illustrated in Section \ref{sec_4}, by numerical comparisons against the schemes without time-sharing.

\subsection{Minimum Energy Per Bit}
In \cite{verdu}, the author presented two important performance criteria in the low-SNR regime: One is the minimum energy per bit required for reliable communication, and the other is the slope of the spectral efficiency. In this paper, we focus on the first criterion. Denoting the channel capacity as $C(P)$ with $P$ the total power consumed at the source and the relay, the minimum energy per bit is defined as~\cite{verdu}:
\begin{equation}
\left(\frac{E_b}{N_0}\right)_{\min}=\lim_{P \to 0}\frac{P}{C(P)}=\frac{\log_e 2}{C'(0)},
\end{equation}
where $C'(0)$ is the first-order derivative at $P = 0$. Note that here the energy per bit is referring to the total energy consumed at both the source and the relay for each bit transmitted over the relay channel.

%If two channels have the same minimum energy per bit, then the following spectral efficiency becomes the key performance measure in the low-SNR regime:%, which is related to the second order derivative of $C(\textrm{SNR})$,
%\begin{equation}
%S_0 = \frac{ 2 [C'(0)]^2}{-C''(0)},
%\end{equation}
%where $C''(0)$ is the second order derivative at $0$ of $C(\textrm{SNR})$.

In the rest of this section, we derive the upper and lower bounds on the minimum energy per bit under both the full-duplex and half-duplex modes for the Gaussian relay channel under consideration.

\subsubsection{Full-Duplex Mode}
We skip the derivation of the lower bound on the minimum energy per bit under the full-duplex mode, which is obtained by applying the max-flow min-cut upper bound on the channel capacity. Interested readers can refer to \cite{gamal} and \cite{xiaodong} for related results. Now we derive the upper bound on the minimum energy per bit.

Assume that the sum power across the source and the relay is subject to a total power constraint $P$. Denote the power at the source as $\beta P$ and the power at the relay as $(1-\beta)P$ for $0<\beta\leq 1$, i.e., $\{P_1,P_2\} = \{\beta P, (1-\beta) P\}$. The achievable rate of the time-sharing CF scheme can be calculated as:
\begin{equation}
R_{\textrm{ts}} = \max_{0<\alpha\leq 1} \alpha\Gamma\left(\frac{\gamma_{31}\beta P}{\alpha}+\frac{\gamma_{32}\gamma_{21}\beta(1-\beta)P^2}{\alpha^2+\alpha GP}\right),
\end{equation}
where $G:=\gamma_{21}\beta+\gamma_{31}\beta+\gamma_{32}(1-\beta)$. For a given active fraction $\alpha$, the upper bound on the minimum energy per bit can be evaluated as:
\begin{eqnarray}
\nonumber\left(\frac{E_b}{N_0}\right)_{\min}^{+} &=& \lim_{P \to 0} \frac{P}{R_{\textrm{ts}}(P)}\\
&=& \lim_{P \to 0} \frac{P}{\alpha}\frac{1}{\Gamma\left(\gamma_{31}\beta P/\alpha+\frac{\gamma_{32}\gamma_{21}\beta(1-\beta)P^2/\alpha^2}{1+GP/\alpha}\right)}.\label{eq_full_ebn0_upper}
\end{eqnarray}
%first order derivative of the achievable rate over $P$ is:
%\begin{equation}\label{eq_full_deri}
%R'_{\textrm{ts}}(P) = \frac{1}{2}\frac{\gamma_{31}\beta+\frac{\gamma_{32}\gamma_{21}\beta(1-\beta)(2\alpha^2P+\alpha GP^2)}{\alpha(\alpha+GP)^2}}{1+\gamma_{31}\beta \frac{P}{\alpha} + \frac{\gamma_{32}\gamma_{21}\beta(1-\beta)P^2}{\alpha(\alpha+GP)}}.
%\end{equation}
To determine the limit, now we discuss several interesting cases based on the behavior of the parameter $\alpha$. When the power $P$ decreases to $0$, while $\alpha$ keeps constant or does not scale down as fast as $P$, the upper bound on the minimum energy per bit in \eqref{eq_full_ebn0_upper} is equal to $2\log_e 2/(\gamma_{31}\beta)$, which is minimized at $\beta=1$. In this case, the power is all utilized by the source, and the upper bound on the minimum energy per bit is the same as that of the point-to-point communication channel. The above result also implies that such an upper bound corresponding to the traditional CF scheme with a fixed $\alpha=1$ is equal to $2\log_e 2/\gamma_{31}$. Furthermore, if the active fraction $\alpha$ goes to $0$ faster than $P$, $\left(E_b/N_0\right)_{\min}^{+}$ in \eqref{eq_full_ebn0_upper} goes to infinity, which leads to a non-meaningful upper bound on the minimum energy per bit. However, if we keep the ratio $P/\alpha$ constant (denoted by $A$) when $P$ and $\alpha$ decrease to zero, we could achieve a tighter upper bound (against the case of $\alpha=1$) on the minimum energy per bit.
%Denote $A:=\lim_{P\to 0}P/\alpha$, the first order derivative of $R_{\textrm{ts}}$ at $P=0$ can be expressed as:
%\begin{eqnarray}
%\small
%R'_{\textrm{ts}}(0) &=& \lim_{P\to 0, \alpha\to 0} R'_{\textrm{ts}}(P)\\
%\nonumber &=& \frac{1}{2}\frac{\gamma_{31}\beta(1+AG)^2+\gamma_{32}\gamma_{21}\beta(1-\beta)(2A+A^2G)}{(1+\gamma_{31}\beta A)(1+AG)^2+\gamma_{32}\gamma_{21}\beta(1-\beta)A^2(1+AG)}.
%\end{eqnarray}
Therefore, we have the following theorem describing the asymptotic upper bound on the minimum energy per bit using the time-sharing CF scheme.

\begin{Theorem}
Under the full-duplex mode, when both $P$ and $\alpha$ tend to zero with $\frac{P}{\alpha}=A$, the minimum energy per bit on the relay channel has the following asymptotic upper bound:
\begin{equation}\label{eq_full_ebn0_upper_theorem}
\left(\frac{E_b}{N_0}\right)_{\min}^{+} = \min_{0<\beta\leq 1, A\geq0} \frac{A}{\Gamma\left(\gamma_{31}\beta A+ \frac{\gamma_{32}\gamma_{21}\beta(1-\beta)A^2}{1+GA}\right)}.
\end{equation}
\end{Theorem}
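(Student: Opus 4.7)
The plan is essentially to evaluate the already-derived time-sharing CF rate along a specific trajectory in the $(\alpha,P)$-plane. Starting from the achievable rate $R_{\textrm{ts}}(P)$ with the power split $\{P_1,P_2\}=\{\beta P,(1-\beta)P\}$, I would fix $A>0$ and plug in the specific feasible choice $\alpha=P/A$, which lies in $(0,1]$ as soon as $P\le A$, hence eventually as $P\to 0$. Because $R_{\textrm{ts}}(P)$ is defined as a maximum over $\alpha$, any feasible $\alpha$ yields a valid lower bound on $R_{\textrm{ts}}(P)$, and consequently an upper bound on $P/R_{\textrm{ts}}(P)$.

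Next I would carry out the substitution. With $P/\alpha=A$, the argument of $\Gamma$ inside the rate expression collapses to
$$\gamma_{31}\beta A+\frac{\gamma_{32}\gamma_{21}\beta(1-\beta)A^2}{1+GA},$$
which is free of $P$, while the prefactor $\alpha$ equals $P/A$. Therefore
$$R_{\textrm{ts}}(P)\ \ge\ \frac{P}{A}\,\Gamma\!\left(\gamma_{31}\beta A+\frac{\gamma_{32}\gamma_{21}\beta(1-\beta)A^2}{1+GA}\right),$$
so that $P/R_{\textrm{ts}}(P)$ is dominated by the $P$-independent quantity $A/\Gamma(\cdot)$.

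At this point the limit $P\to 0$ (equivalently $\alpha=P/A\to 0$) is trivial and the inequality is preserved. Invoking the definition $(E_b/N_0)_{\min}=\lim_{P\to 0}P/C(P)$ together with the achievability of $R_{\textrm{ts}}(P)$ for every $P$, I would conclude that $(E_b/N_0)_{\min}^{+}\le A/\Gamma(\cdot)$ holds for every admissible pair $(\beta,A)$. Because $\beta\in(0,1]$ and $A>0$ are otherwise free, minimizing the right-hand side over both parameters produces the tightest bound attainable within this family of strategies, which is exactly the expression stated in the theorem.

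The main obstacle is conceptual rather than computational. The authors' preceding discussion already dismisses the two neighboring regimes: $\alpha$ bounded away from zero (or decaying slower than $P$) recovers only the point-to-point bound $2\log_e 2/\gamma_{31}$, and $\alpha$ decaying faster than $P$ sends the bound to $+\infty$. The delicate part is to justify that the \emph{same-order} scaling $\alpha=\Theta(P)$ is the unique asymptotic regime in which time-sharing can improve on the baseline, and that evaluating a feasible $\alpha(P)=P/A$ at each $P$ inside the $\max$ gives a bona fide asymptotic upper bound on $(E_b/N_0)_{\min}^{+}$ rather than merely a limit of non-uniform bounds. Once this is in place, the algebra above closes the argument.
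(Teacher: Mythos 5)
Your proposal is correct and follows essentially the same route as the paper: substitute the trajectory $\alpha = P/A$ into the time-sharing CF rate with power split $\{\beta P,(1-\beta)P\}$, observe that the argument of $\Gamma(\cdot)$ becomes independent of $P$ so that $P/R_{\textrm{ts}}(P)$ is bounded by $A/\Gamma(\cdot)$, and then minimize over $\beta$ and $A$. Your added remarks on feasibility of $\alpha=P/A$ for $P\le A$ and on the max-over-$\alpha$ giving a valid one-sided bound only make explicit what the paper leaves implicit; no further justification of the $\alpha=\Theta(P)$ regime is needed since the theorem claims only an upper bound.
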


It can be shown that under the full-duplex mode, the above upper bound converges to the lower bound on the minimum energy per bit when $\gamma_{32}\to\infty$~\cite{gamal} since both the upper and lower bounds go to $2\log_e 2/(\gamma_{31}+\gamma_{21})$.
The upper bound improvement of the minimum energy per bit due to time-sharing will be further illustrated in Section \ref{sec_4} by numerical comparisons in the low-SNR regime.

\subsubsection{Half-Duplex Mode}
Here we investigate the lower and upper bounds on the minimum energy per bit under the half-duplex mode. The sum of the source power and the relay power is fixed to $P$ as in the full-duplex mode. Denote the power at the source as $\beta P$ and the power at the relay as $(1-\beta)P$, respectively. Furthermore, with CF we assume that the source uses the same power during the first and the second transmitting phases, i.e., $P_1^{(1)} = P_1^{(2)}=\beta P$. Consequently, we have the power allocation as $\{P_1^{(1)},P_1^{(2)},P_2\}=\{\beta P,\beta P, (1-\beta)P\}$.

First we derive the lower bound on the minimum energy per bit, which is given by the following theorem.
%The detailed proof is given in Appendix \ref{appendix_1}.

\begin{Theorem}
Under the half-duplex TDD mode, the lower bound on the minimum energy per bit can be obtained by one-dimensional searching over $\lambda \in (\gamma_{31}/(\gamma_{31}+\gamma_{21}),1]$ instead of three-dimensional searching over $\lambda$, $\beta$, and $\rho_x$. When the relay-destination link gain $\gamma_{32}$ goes to infinity, the lower bound converges to $2\log_e 2/(\gamma_{31}+\gamma_{21})$.
%\begin{equation}\label{eq_half_lowerbound_2}
%\nonumber \left(\frac{E_b}{N_0}\right)_{\min}^{-} = \left\{\begin{array}{ll}
%\frac{2(\gamma_{32}-4\gamma_{21})\log_e 2 }{\gamma_{31}\gamma_{32}-2\gamma_{21}\gamma_{31}+\gamma_{21}\gamma_{32}-2\gamma_{21}\sqrt{\gamma_{31}^2+\gamma_{31}\gamma_{32}+\gamma_{21}\gamma_{32}}} & \textrm{if}\ \gamma_{32}\ \textrm{is finite,}\\
%\frac{2\log_e 2}{\gamma_{31}+\gamma_{21}} & \textrm{otherwise.}
%\end{array}\right.
%\end{equation}
\end{Theorem}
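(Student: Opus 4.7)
The plan is to reduce the three-dimensional optimization in $(\lambda, \beta, \rho_{x})$ that defines $C^{+}_{\text{half}}(P)$ in~(\ref{eq_half_upperbound}) to an equivalent one-dimensional search over $\lambda$, and then to pass to the low-SNR limit in order to read off $(E_{b}/N_{0})^{-}_{\min}$.

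First, I would linearize both $C^{+}_{\text{1-half}}$ and $C^{+}_{\text{2-half}}$ around $P=0$ by writing $\Gamma(x) = x/(2\ln 2) + O(x^{2})$, after substituting $P_{1}^{(1)}=P_{1}^{(2)}=\beta P$ and $P_{2}=(1-\beta)P$. This yields $(C^{+}_{\text{half}})'(0) = \frac{1}{2\ln 2}\max_{\lambda,\beta,\rho_{x}}\min\{A_{1}(\lambda,\beta,\rho_{x}),\,A_{2}(\lambda,\beta,\rho_{x})\}$, where $A_{1}$ is strictly increasing in $\rho_{x}$ through the cross term $2\rho_{x}\sqrt{\gamma_{31}\beta\gamma_{32}(1-\beta)}$, while $A_{2}$ is strictly decreasing in $\rho_{x}^{2}$. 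Given the definition $(E_{b}/N_{0})^{-}_{\min} = \log_{e} 2 / (C^{+}_{\text{half}})'(0)$, the whole problem reduces to computing this max-min.

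Second, I would eliminate $\rho_{x}$ and $\beta$ analytically for each fixed $\lambda$. The monotonicity of $A_{1}$ and $A_{2}$ in $\rho_{x}$ forces the inner $\max_{\rho_{x}}\min\{A_{1},A_{2}\}$ to be attained either at the unique root of the quadratic $A_{1}(\rho_{x})=A_{2}(\rho_{x})$ or on a boundary of $[0,1]$. Solving that quadratic yields a closed form $\rho_{x}^{*}(\lambda,\beta)$, and back-substitution produces a balanced value $F(\lambda,\beta)$ whose $\beta$-dependence has the structure $c_{1}\beta - c_{2}(1-\beta) + 2c_{3}\sqrt{\beta(1-\beta)}$ with coefficients $c_{i}$ depending only on $\lambda$ and the $\gamma_{ij}$. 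The change of variable $t=\sqrt{(1-\beta)/\beta}$ reduces $\partial F/\partial \beta = 0$ to a simple quadratic in $t$ with an explicit positive root, delivering $\beta^{*}(\lambda)$ in closed form. Composing the two substitutions leaves only a one-dimensional search in $\lambda$; the admissible range $\lambda \in (\gamma_{31}/(\gamma_{31}+\gamma_{21}),\,1]$ then emerges by imposing $\rho_{x}^{*}(\lambda,\beta^{*}(\lambda))\in[0,1]$ and $\beta^{*}(\lambda)\in(0,1]$ and checking that below this threshold the interior-balanced solution is dominated by the boundary case $\rho_{x}=0$, at which the bound degenerates to the trivial direct-transmission value $2\log_{e}2/\gamma_{31}$.

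For the asymptotic claim I would let $\gamma_{32}\to\infty$: for every $\beta<1$ the cut-1 slope $A_{1}$ diverges, so $\min\{A_{1},A_{2}\} = A_{2} = \beta[\gamma_{31} + \lambda\gamma_{21} - (1-\lambda)\rho_{x}^{2}\gamma_{31}]$, which is clearly maximized by $\rho_{x}=0$, $\lambda=1$, and $\beta\uparrow 1$, producing the supremum $\gamma_{31}+\gamma_{21}$. This translates to $(E_{b}/N_{0})^{-}_{\min}\to 2\log_{e}2/(\gamma_{31}+\gamma_{21})$, as claimed. The main obstacle I anticipate lies in the boundary analysis of the reduction step, namely verifying that the analytically computed interior-balanced point $(\rho_{x}^{*}(\lambda,\beta^{*}(\lambda)),\beta^{*}(\lambda))$ is the global optimum of the min-max and not dominated by a boundary configuration, and pinning down exactly the range of $\lambda$ over which all feasibility constraints hold simultaneously; the continuity/compactness argument needed to interchange $\lim_{P\to 0}$ with the maximization is routine by comparison.
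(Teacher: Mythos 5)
Your overall strategy coincides with the paper's: linearize both cut expressions at $P=0$, exploit the opposite monotonicities of $C_{\textrm{1-half}}^{+'}$ and $C_{\textrm{2-half}}^{+'}$ in $\rho_x$ to reduce the inner max-min to either a boundary point or the unique crossing point $\rho_x^*(\lambda,\beta)$, optimize $\beta$ in closed form, and handle $\gamma_{32}\to\infty$ by noting that cut~1 diverges so that cut~2 alone governs the limit, giving $2\log_e 2/(\gamma_{31}+\gamma_{21})$. That asymptotic part of your argument is correct and essentially identical to the paper's.

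However, there is a genuine gap in the part that actually establishes the first claim of the theorem. First, your explanation of why the search can be restricted to $\lambda\in(\gamma_{31}/(\gamma_{31}+\gamma_{21}),1]$ is not right: you assert that below this threshold the optimum collapses to $\rho_x=0$ and the bound ``degenerates to the trivial direct-transmission value $2\log_e 2/\gamma_{31}$.'' It does not. At $\rho_x=0$ the balanced cut value is $\tfrac{1}{2}\beta_1(\gamma_{31}+\gamma_{21}\lambda)$ with $\beta_1=\gamma_{32}(1-\lambda)/(\gamma_{32}(1-\lambda)+\gamma_{21}\lambda)$, which still involves the relay and is strictly better than direct transmission in general. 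The paper's actual reason for excluding $(0,\gamma_{31}/(\gamma_{31}+\gamma_{21})]$ from the \emph{search} is different: the regime in which cut~1 is binding for all $\rho_x$ (optimal $\rho_x=1$, $\beta\ge\beta_2$) can only arise when $\lambda>\gamma_{31}/(\gamma_{31}+\gamma_{21})$; for smaller $\lambda$ only the balanced solution occurs, the resulting objective $\gamma_{31}-\gamma_{32}(1-\lambda)+\sqrt{(\gamma_{31}+\gamma_{32}(1-\lambda))^2+4\gamma_{21}\gamma_{32}\lambda(1-\lambda)}$ is concave in $\lambda$, and its maximizer $\lambda_{\textrm{opt}}$ is available in closed form --- so that subinterval needs no search at all. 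Second, you explicitly defer ``verifying that the interior-balanced point is the global optimum and not dominated by a boundary configuration, and pinning down the range of $\lambda$'' as an anticipated obstacle; but that case analysis (the comparison among the candidates at $\beta_1$, $\beta_2$, $\beta_2^*$, $\beta_3^*$ with $\rho_x\in\{0,1,\rho_x^*\}$, and the resulting four sub-cases for each $\lambda$ in the search interval) \emph{is} the substantive content of the theorem's first assertion. As written, your proposal assumes the conclusion of the reduction rather than proving it.
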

\begin{proof}
See Appendix \ref{appendix_1}.
\end{proof}

Now we consider the upper bound on the minimum energy per bit, which is achieved by the time-sharing CF scheme. %In order to simplify the derivations, we fix the TDD parameter $\lambda$ to $0.5$.

\begin{Theorem}
Under the half-duplex mode, the minimum energy per bit on the relay channel is upper-bounded by:
\begin{equation}\label{eq_half_ebn0_upper}\small
\left(\frac{E_b}{N_0}\right)_{\min}^{+} = \min_{0<\lambda\leq 1,0<\beta\leq 1,A\geq 0}\frac{A}{\lambda\Gamma\left(\gamma_{31}\beta A + \frac{\gamma_{21}\beta A}{1+\frac{1+\gamma_{21}\beta A+\gamma_{31}\beta A}{(1+\gamma_{31}\beta A)\left(\left(1+\frac{\gamma_{32}(1-\beta)A}{1+\gamma_{31}\beta A}\right)^{(1-\lambda)/\lambda}-1\right)}}\right)+(1-\lambda)\Gamma\left(\gamma_{31}\beta A\right)},\ \textrm{with}\ A=\frac{P}{\alpha}.
\end{equation}
%\begin{equation}\label{eq_half_ebn0_upper}
%\left(\frac{E_b}{N_0}\right)_{\min}^{+} = \frac{2\log_e 2}{\max_{0<\beta\leq 1,A} \left(R'_{\textrm{ts}}(0)+\frac{1}{2}\frac{\gamma_{31}\beta}{1+\gamma_{31}\beta A}\right)},
%\end{equation}
Such an upper bound is tight (i.e., achieves the lower bound on the minimum energy per bit) asymptotically when $\gamma_{32}$ goes to infinity.
\end{Theorem}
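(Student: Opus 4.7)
The plan is to derive the upper bound by taking the same $P,\alpha\to 0$ limit with $A=P/\alpha$ held fixed that was used in the full-duplex case of Theorem~1, and then to obtain the asymptotic tightness claim by analyzing the resulting expression as $\gamma_{32}\to\infty$.

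First, I would start from the time-sharing CF achievable rate $R_{\textrm{ts\_h}}$ of the previous subsection and substitute the stated power allocation $\{P_1^{(1)},P_1^{(2)},P_2\}=\{\beta P,\beta P,(1-\beta)P\}$. Dividing numerator and denominator of the quantization-noise formula by $\alpha$ rewrites $N_{\textrm{ts\_w}}/N_1$ as a function of $A$, $\beta$, and $\lambda$ alone, namely the nested fraction appearing inside the first logarithm in \eqref{eq_half_ebn0_upper}. The same substitution turns $R_{\textrm{ts\_h}}/P=R_{\textrm{ts\_h}}/(\alpha A)$ into the $A$-parametrized expression $\lambda\Gamma(\cdot)+(1-\lambda)\Gamma(\cdot)$ of the theorem statement. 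Arguing exactly as in Theorem~1, holding $\alpha$ constant or letting $\alpha$ decay slower than $P$ collapses the bound to a point-to-point value, while letting $\alpha$ decay faster than $P$ blows up $P/R$; only $P/\alpha=A$ fixed yields a meaningful limit, and minimizing the resulting quantity over $(A,\beta,\lambda)$ produces \eqref{eq_half_ebn0_upper}.

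For the asymptotic tightness, I would then fix $\lambda\in(0,1)$, $\beta\in(0,1)$, and $A>0$ and let $\gamma_{32}\to\infty$. The divergence of the factor $\bigl(1+\gamma_{32}(1-\beta)A/(1+\gamma_{31}\beta A)\bigr)^{(1-\lambda)/\lambda}-1$ forces $N_{\textrm{ts\_w}}/N_1\to 0$, so the argument of the first $\Gamma$ collapses to $(\gamma_{31}+\gamma_{21})\beta A$ and the bound reduces to $A/[\lambda\Gamma((\gamma_{31}+\gamma_{21})\beta A)+(1-\lambda)\Gamma(\gamma_{31}\beta A)]$. Applying the first-order expansion $\Gamma(x)=x/(2\log_e 2)+O(x^2)$ and sending $A\to 0$ reduces this to $2\log_e 2/(\beta[\gamma_{31}+\lambda\gamma_{21}])$, whose infimum over the admissible range of $\beta$ and $\lambda$ is $2\log_e 2/(\gamma_{31}+\gamma_{21})$, approached in the limit $\beta\to 1$, $\lambda\to 1$. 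Since Theorem~2 shows the lower bound has the same $\gamma_{32}\to\infty$ limit, the two bounds agree asymptotically.

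The main technical obstacle will be handling the nested limits cleanly. The boundary $\lambda=1$ makes $(1-\lambda)/\lambda$ vanish and causes $N_{\textrm{ts\_w}}$ to diverge, yielding a strictly worse bound than the $\lambda<1$ regime, so the infimum over $\lambda\in(0,1]$ is approached rather than attained at the endpoint; similarly, one must verify that the convergence $N_{\textrm{ts\_w}}/N_1\to 0$ is well-behaved enough in $A$ to commute with the subsequent $A\to 0$ Taylor expansion. Once this order-of-limits issue is settled, the remaining steps reduce to routine algebra together with the matching limit from Theorem~2.
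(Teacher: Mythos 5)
Your proposal follows the same route as the paper's own proof: substitute the power split $\{\beta P,\beta P,(1-\beta)P\}$ into $R_{\textrm{ts\_h}}$, let $P,\alpha\to 0$ with $P/\alpha=A$ fixed to obtain \eqref{eq_half_ebn0_upper}, and check that the resulting bound meets the Theorem~2 lower bound as $\gamma_{32}\to\infty$. The paper merely asserts these steps, whereas you correctly supply the supporting computations (the vanishing of $N_{\textrm{ts\_w}}/N_1$, the first-order expansion of $\Gamma$ giving $2\log_e 2/(\gamma_{31}+\gamma_{21})$ at $\beta,\lambda\to 1$, and the order-of-limits caveat), so the approaches are essentially identical.
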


\begin{proof}
As shown in \eqref{eq_half_cf_ts_rate}, the achievable rate of the time-sharing CF scheme under the half-duplex mode is computed as:
\begin{equation}
\small
R_{\textrm{ts\_h}} = \max_{0<\alpha\leq 1} \alpha\lambda \Gamma \left(\frac{\gamma_{31}\beta P}{\alpha} + \frac{\gamma_{21}\beta P}{\alpha+\alpha N_{\textrm{ts\_w}}/N_1}\right)   +\  \alpha(1-\lambda)\Gamma \left(\frac{\gamma_{31}\beta P}{\alpha}\right),\label{eq_half_cf_ts}
\end{equation}
where the quantization noise power $N_{\textrm{ts\_w}}$ is
%\begin{equation}
%N_{\textrm{ts\_w}} = N_1\frac{\alpha+\gamma_{21}\beta P+\gamma_{31}\beta %P}{\gamma_{32}(1-\beta)P}.\label{eq_half_noise}
%\end{equation}
\begin{equation}
\nonumber N_{\textrm{ts\_w}} = N_1\frac{\alpha+\gamma_{21}P_1^{(1)}+\gamma_{31}P_1^{(1)}}{\left(\alpha+\gamma_{31}P_1^{(1)}\right)\left(\left(1+\frac{\gamma_{32}P_2}{\alpha+\gamma_{31}P_1^{(2)}}\right)^{\frac{1-\lambda}{\lambda}}-1\right)}.
\end{equation}
%By substituting \eqref{eq_half_noise} into \eqref{eq_half_cf_ts}, we can rewrite the achievable rate as:
%\begin{equation}
%R_{\textrm{ts\_h}}(P) = \frac{1}{2}R_{\textrm{ts}}(P) + %\frac{1}{2}\Gamma\left(\gamma_{31}\beta\frac{P}{\alpha}\right),
%\end{equation}
%\begin{equation}
%R'_{\textrm{ts\_h}}(P) = \frac{1}{2}R'_{\textrm{ts}}(P) + \frac{1}{4}\frac{\gamma_{31}\beta}{1+\frac{\gamma_{31}\beta P}{\alpha}},
%\end{equation}
%where $R_{\textrm{ts}}(P)$ is the achievable rate of the time-sharing CF scheme under the full-duplex mode.
Subsequently, we can have an upper bound by using the definition of the minimum energy per bit. Then as in the full-duplex mode, the above upper bound can be tightened to achieve that in \eqref{eq_half_ebn0_upper} by letting $\alpha$ decrease to $0$ at the same rate as $P$, i.e., $P/\alpha = A$. Moreover, the upper bound in \eqref{eq_half_ebn0_upper} converges to $2\log_e 2/(\gamma_{31}+\gamma_{21})$ when $\gamma_{32}\to\infty$, which implies that the upper bound is asymptotically optimal.
\end{proof}

The comparisons between the time-sharing CF upper bound and the lower bound under both the full-duplex and half-duplex modes are shown in Section \ref{sec_4}. Note that the minimum energy per bit can be clearly improved by time-sharing; hence we do not need to check the second performance measure: the relative spectral efficiency~\cite{verdu}, which is related to the second-order derivative of the achievable rate.

\section{Numerical Results} \label{sec_4}

In this section, we illustrate the performance improvement due to time-sharing in CF with several numerical results. The achievable rate and the minimum energy bit are evaluated under both the full-duplex and half-duplex modes.

The following channel model is considered for all the numerical comparisons. The source, relay, and destination are aligned over a line. The distance between the source and the relay is $d$ ($0<d<1$), and the distance between the source and the destination is $1$. The channel amplitude is inversely proportional to the distance, which means:
\begin{equation}
h_{21} = \frac{1}{d},\ h_{32} = \frac{1}{1-d},\ h_{31} = 1,\quad d
\in (0,1).
\end{equation}

The improvement of achievable rate for various $d$ values is shown in Fig. \ref{fig_full_rate} and Fig. \ref{fig_half_rate} under the full-duplex and half-duplex modes when the active fraction $\alpha$ is optimized for the corresponding time-sharing CF strategies. Here we demonstrate the improvement by drawing the relative rate improvement $(R_{\textrm{ts}}-R_{\textrm{cf}})/R_{\textrm{cf}}$ since the absolute values of the achievable rates are small in the low-SNR regime. As shown in Fig. \ref{fig_full_rate}, the time-sharing CF scheme improves the achievable rate up to $80\%$ under the full-duplex mode when $d = 0.25$ or $0.75$, and up to $75\%$ when $d = 0.5$. Specifically, the relative improvement at $d=0.25$ and $0.75$ are the same since the CF achievable rate under the full-duplex mode is a symmetric function with respect to $d=0.5$. If the relay works under the half-duplex mode, in Fig. \ref{fig_half_rate} we show that the time-sharing CF scheme improves the achievable rate up to $70\%$ when $d=0.25$, up to $55\%$ when $d=0.5$, and up to $49\%$ when $d=0.75$. Unlike the full-duplex mode, the relative improvement decreases if $d$ is larger than $0.5$.

The bounds on the minimum energy per bit are given in Fig. \ref{fig_full_ebn0} and Fig. \ref{fig_half_ebn0} with respect to the source-relay distance $d$. Under the full-duplex mode as shown in Fig. \ref{fig_full_ebn0}, the upper bound can be obviously tightened by the time-sharing CF scheme when the relay is closer to the destination than the source ($d\geq 0.6$). Under the half-duplex mode as shown in Fig. \ref{fig_half_ebn0}, the time-sharing CF scheme tightens the upper bound when the relay is further away from the source ($d\geq 0.8$). We can also see that the upper and lower bounds meet under both the full-duplex and half-duplex modes when $d$ goes to $1$, i.e., $\gamma_{32}\to \infty$.

\section{Conclusion} \label{sec_5}

We investigated the performance of the time-sharing CF scheme in the low-SNR regime over Gaussian relay channels. The achievable rates and the minimum energy per bit were studied under both the full-duplex and half-duplex TDD modes, where we derived the upper and lower bounds on the minimum energy per bit. Furthermore, we showed that the time-sharing CF scheme tightens the upper bound on the minimum energy per bit by letting the active fraction go to zero at the same speed as the transmit power. We also provided some numerical results, which validates the performance improvement due to time-sharing.

% conference papers do not normally have an appendix

\appendices
\section{Proof For Theorem 2}\label{appendix_1}
%\begin{proof}
Based on the max-flow min-cut bound \eqref{eq_half_upperbound} on the system capacity, the minimum energy per bit can be lower-bounded by:
\begin{equation}
\left(\frac{E_b}{N_0}\right)_{\min}^{-} = \min_{0\leq\rho_x\leq 1, 0<\beta\leq 1, 0<\lambda\leq 1} \max \left\{\frac{\log_e 2}{C_{\textrm{1-half}}^{+'}},\frac{\log_e 2}{C_{\textrm{2-half}}^{+'}}\right\},\label{eq_half_lowerbound}
\end{equation}
where the first-order derivatives of $C_{\textrm{1-half}}^{+}$ and $C_{\textrm{2-half}}^{+}$ at $P=0$ can be respectively computed as:
\begin{eqnarray}
C_{\textrm{1-half}}^{+'} &=& \frac{1}{2}\left(\gamma_{31}\beta+\gamma_{32}\left(1-\lambda\right)(1-\beta)+2\left(1-\lambda\right)\rho_x\sqrt{\gamma_{31}\gamma_{32}\beta(1-\beta)}\right),\nonumber\\
C_{\textrm{2-half}}^{+'} &=& \frac{1}{2} \left(\gamma_{31}\beta+\gamma_{21}\lambda\beta-\gamma_{31}\left(1-\lambda\right)\beta\rho_x^2\right).\label{eq_half_lower_deri}
\end{eqnarray}
We can equivalently formulate the optimization problem as:
\begin{equation}
\max_{0\leq\rho_x\leq 1, 0<\beta\leq 1, 0<\lambda\leq 1} \min \left\{C_{\textrm{1-half}}^{+'},C_{\textrm{2-half}}^{+'}\right\}.
\end{equation}

If the relay-destination link gain $\gamma_{32}$ goes to infinity, $C_{\textrm{2-half}}^{+'}$ is always smaller than $C_{\textrm{1-half}}^{+'}$ for $\beta \in (0,1)$ and $\lambda \in (0,1)$, and the lower bound on the energy per bit can be calculated as:
\begin{eqnarray}
\nonumber\left(\frac{E_b}{N_0}\right)_{\min}^{-} &=& \min_{0\leq\rho_x\leq 1, 0<\beta< 1, 0<\lambda< 1} \frac{\log_e 2}{C_{\textrm{2-half}}^{+'}}\\
&=& \frac{2\log_e 2}{\gamma_{31}+\gamma_{21}},
\end{eqnarray}
%where $\epsilon$ is an arbitrarily small positive number.%
which proves the second part of Theorem 2.

When the relay-destination link gain $\gamma_{32}$ is finite, first we calculate the corresponding optimal $\beta$ and $\rho_x$ for an arbitrarily given TDD parameter $\lambda$. %For a given $\beta$, we determine the optimal $\rho_x$ as follows.
If $\beta\leq \gamma_{32}(1-\lambda)/\left(\gamma_{32}(1-\lambda)+\gamma_{21}\lambda\right) \triangleq \beta_1$, the optimal $\rho_x$ is $0$ since $C_{\textrm{1-half}}^{+'}\geq C_{\textrm{2-half}}^{+'}$ and $\min \left\{C_{\textrm{1-half}}^{+'},C_{\textrm{2-half}}^{+'}\right\} = C_{\textrm{2-half}}^{+'}$, where $C_{\textrm{2-half}}^{+'}$ is a monotonically decreasing function over $\rho_x$. In this case, $C_{\textrm{2-half}}^{+'}$ is maximized by $\beta = \beta_1$.

If $\beta \geq \gamma_{32}(1-\lambda)/\left(\gamma_{32}(1-\lambda) + \left(\sqrt{\gamma_{21}\lambda}-\sqrt{\gamma_{31}(1-\lambda)}\right)^2\right) \triangleq \beta_2$, we have $\min \left\{C_{\textrm{1-half}}^{+'},C_{\textrm{2-half}}^{+'}\right\} = C_{\textrm{1-half}}^{+'}$ since $C_{\textrm{2-half}}^{+'}$ is greater than $C_{\textrm{1-half}}^{+'}$ for all $\rho_x \in [0,1]$ (This case only happens when $\lambda > \gamma_{31}/\left(\gamma_{31}+\gamma_{21}\right)$ to ensure that $C_{\textrm{2-half}}^{+'}$ is positive.). Now the optimal $\rho_x$ is $1$ since $C_{\textrm{1-half}}^{+'}$ is a monotonically increasing function over $\rho_x$. The optimal $\beta$ is given as follows by maximizing $C_{\textrm{1-half}}^{+'}\big\vert_{\rho_x = 1}$, which stands for the value of $C_{\textrm{1-half}}^{+'}$ given $\rho_x=1$ where the resulting function is concave over $\beta$:
\begin{equation}
\beta^*=\max \left(\beta_2, \beta_2^* \triangleq \frac{1}{2}+\frac{1}{2}\frac{\gamma_{31}-\gamma_{32}(1-\lambda)}{\sqrt{\left(\gamma_{31}-\gamma_{32}(1-\lambda)\right)^2+4\gamma_{31}\gamma_{32}(1-\lambda)^2}}\right) .
\end{equation}

If $\beta_1 \leq \beta \leq \beta_2$, we let $C_{\textrm{1-half}}^{+'}(\rho_x) = C_{\textrm{2-half}}^{+'}(\rho_x)$ in order to solve the min-max problem since there must exist an unique crossing point. The optimal $\rho_x$ can be obtained as:
\begin{equation}
\rho_x^* = \frac{\sqrt{\gamma_{21}\lambda\beta/(1-\lambda)}-\sqrt{\gamma_{32}(1-\beta)}}{\sqrt{\gamma_{31}\beta}}.\label{eq_half_rho_x}
\end{equation}
By substituting \eqref{eq_half_rho_x} into \eqref{eq_half_lower_deri}, we can maximize $C_{\textrm{2-half}}^{+'}$ for $\beta \in [\beta_1,\beta_2]$, which leads to the following optimal $\beta$:
\begin{equation}
\beta^*=\label{eq_half_beta}\min \left(\beta_2,\beta_3^* \triangleq \frac{1}{2}+\frac{1}{2}\frac{\gamma_{31}+\gamma_{32}(1-\lambda)}{\sqrt{(\gamma_{31}+\gamma_{32}(1-\lambda))^2+4\gamma_{21}\gamma_{32}\lambda(1-\lambda)}}\right).
\end{equation}
When $\beta_3^*$ is optimal for a given $\lambda$, the lower bound on the minimum energy per bit can be derived by substituting $\beta_3^*$ and the corresponding $\rho_x^*$ into \eqref{eq_half_lowerbound} and \eqref{eq_half_lower_deri}:
\begin{eqnarray}\label{eq_half_lowerbound_3}
\nonumber\left(\frac{E_b}{N_0}\right)_{\min}^{-} &=& \min_{0<\lambda\leq 1} \frac{\log_e 2}{C_{\textrm{2-half}}^{+'}}\\
&=& \min_{0<\lambda\leq 1} \frac{4\log_e 2}{\gamma_{31}-\gamma_{32}(1-\lambda) + \sqrt{\left(\gamma_{31}+\gamma_{32}(1-\lambda)\right)^2+4\gamma_{21}\gamma_{32}\lambda(1-\lambda)}},
\end{eqnarray}
where the denominator is a concave function over $\lambda \in (0,1]$. Therefore, the optimal $\lambda$ can be obtained by maximizing the denominator in \eqref{eq_half_lowerbound_3}:
\begin{equation}\label{eq_opt_lambda}
\lambda_{\textrm{opt}} = \frac{\gamma_{31}+\gamma_{32}-2\gamma_{21}-\sqrt{\gamma_{31}^2+\gamma_{31}\gamma_{32}+\gamma_{21}\gamma_{32}}}{\gamma_{32}-4\gamma_{21}}
\end{equation}
Correspondingly, we can calculate the lower bound on the minimum energy per bit by substituting $\lambda_{\textrm{opt}}$ into \eqref{eq_half_lowerbound_3}:
\begin{equation}\label{eq_half_lowerbound_4}
\left(\frac{E_b}{N_0}\right)_{\min}^{-} = \frac{2(\gamma_{32}-4\gamma_{21})\log_e 2 }{\gamma_{31}\gamma_{32}-2\gamma_{21}\gamma_{31}+\gamma_{21}\gamma_{32}-2\gamma_{21}\sqrt{\gamma_{31}^2+\gamma_{31}\gamma_{32}+\gamma_{21}\gamma_{32}}}.
\end{equation}

Now we summarize the solution by combining all the cases that we discussed above.
\begin{enumerate}
\item For $\lambda \in (0,\gamma_{31}/(\gamma_{31}+\gamma_{21})]$, $C_{\textrm{2-half}}^{+'}$ is not ensured to be larger than $C_{\textrm{1-half}}^{+'}$ for all $\rho_x \in [0,1]$. Accordingly, the lower bound is $\left(\frac{E_b}{N_0}\right)_{\min}^{-}$ in \eqref{eq_half_lowerbound_4} if $\lambda_{\textrm{opt}}\in (0, \gamma_{31}/(\gamma_{31}+\gamma_{21})]$. Otherwise, the lower bound is $\frac{2\log_e 2}{C_{\textrm{1-half}}^{+'}}\big |_{\lambda=\gamma_{31}/(\gamma_{31}+\gamma_{21}),\beta_{3}^*,\rho_x^*}$.
\item For $\lambda \in (\gamma_{31}/(\gamma_{31}+\gamma_{21}), 1]$, one-dimensional searching can be utilized to obtain the lower bound. Specifically, for a given $\lambda \in (\gamma_{31}/(\gamma_{31}+\gamma_{21}), 1]$, we obtain an intermediate lower bound by taking the minimum among the ones derived in the following four cases. Then we search over all possible $\lambda \in (\gamma_{31}/(\gamma_{31}+\gamma_{21}), 1]$ to have the lower bound for case 2).
    \begin{itemize}
    \item If $\beta_2\leq\beta_2^*,\ \beta_3^*$, the corresponding lower bound is $\frac{2\log_e 2}{C_{\textrm{1-half}}^{+'}}\big |_{\beta_2^*,\rho_x=1}$;
    \item If $\beta_3^*<\beta_2\leq\beta_2^*$, the corresponding lower bound is $\frac{2\log_e 2}{\max\left(C_{\textrm{1-half}}^{+'}|_{\beta_2^*,\rho_x=1}, C_{\textrm{1-half}}^{+'}|_{\beta_3^*,\rho_x^*}\right)}$;
    \item If $\beta_2^*<\beta_2\leq\beta_3^*$, the corresponding lower bound is $\frac{2\log_e 2}{C_{\textrm{1-half}}^{+'}}\big |_{\beta_2,\rho_x=1}$;
    \item Otherwise, $\beta_2^*,\ \beta_3^*<\beta_2$, the corresponding lower bound is $\frac{2\log_e 2}{C_{\textrm{1-half}}^{+'}}\big |_{\beta_3^*,\rho_x^*}$.
    \end{itemize}
\item The global lower bound can be calculated as the minimum value between that for case 1) of $\lambda \in (0,\gamma_{31}/(\gamma_{31}+\gamma_{21})]$ and that for case 2) of $\lambda \in (\gamma_{31}/(\gamma_{31}+\gamma_{21}), 1]$.
\end{enumerate}
%\end{proof}

% use section* for acknowledgement
%\section*{Acknowledgment}
% optional entry into table of contents (if used)
%\addcontentsline{toc}{section}{Acknowledgment}
%The authors would like to thank various sponsors for supporting their research.
% In particular, we thank the TPC chairs of ISIT 2006 for
% providing the \LaTeX\ templates for paper submission.

% trigger a \newpage just before the given reference
% number - used to balance the columns on the last page
% adjust value as needed - may need to be readjusted if
% the document is modified later
%\IEEEtriggeratref{8}
% The "triggered" command can be changed if desired:
%\IEEEtriggercmd{\enlargethispage{-5in}}

% references section
% NOTE: BibTeX documentation can be easily obtained at:
% http://www.ctan.org/tex-archive/biblio/bibtex/contrib/doc/

% can use a bibliography generated by BibTeX as a .bbl file
% standard IEEE bibliography style from:
% http://www.ctan.org/tex-archive/macros/latex/contrib/supported/IEEEtran/bibtex
%\bibliographystyle{IEEEtran.bst}
% argument is your BibTeX string definitions and bibliography database(s)
%\bibliography{IEEEabrv,../bib/paper}
%
% <OR> manually copy in the resultant .bbl file
% set second argument of \begin to the number of references
% (used to reserve space for the reference number labels box)

\newpage
\begin{figure}[!t]
\centering
% for double column submission
`\includegraphics[width=3.5in]{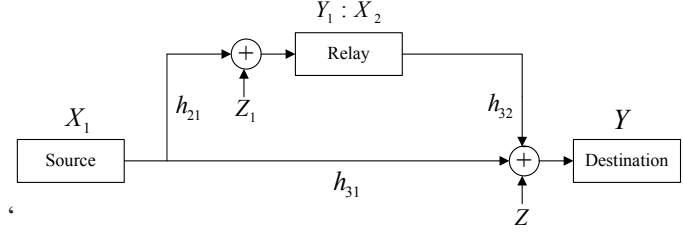}
\caption{The Relay Channel Model.} \label{fig_relay_channel_model}
\end{figure}

\begin{figure}[!t]
\centering
% for double column submission
\includegraphics[width=3.5in]{./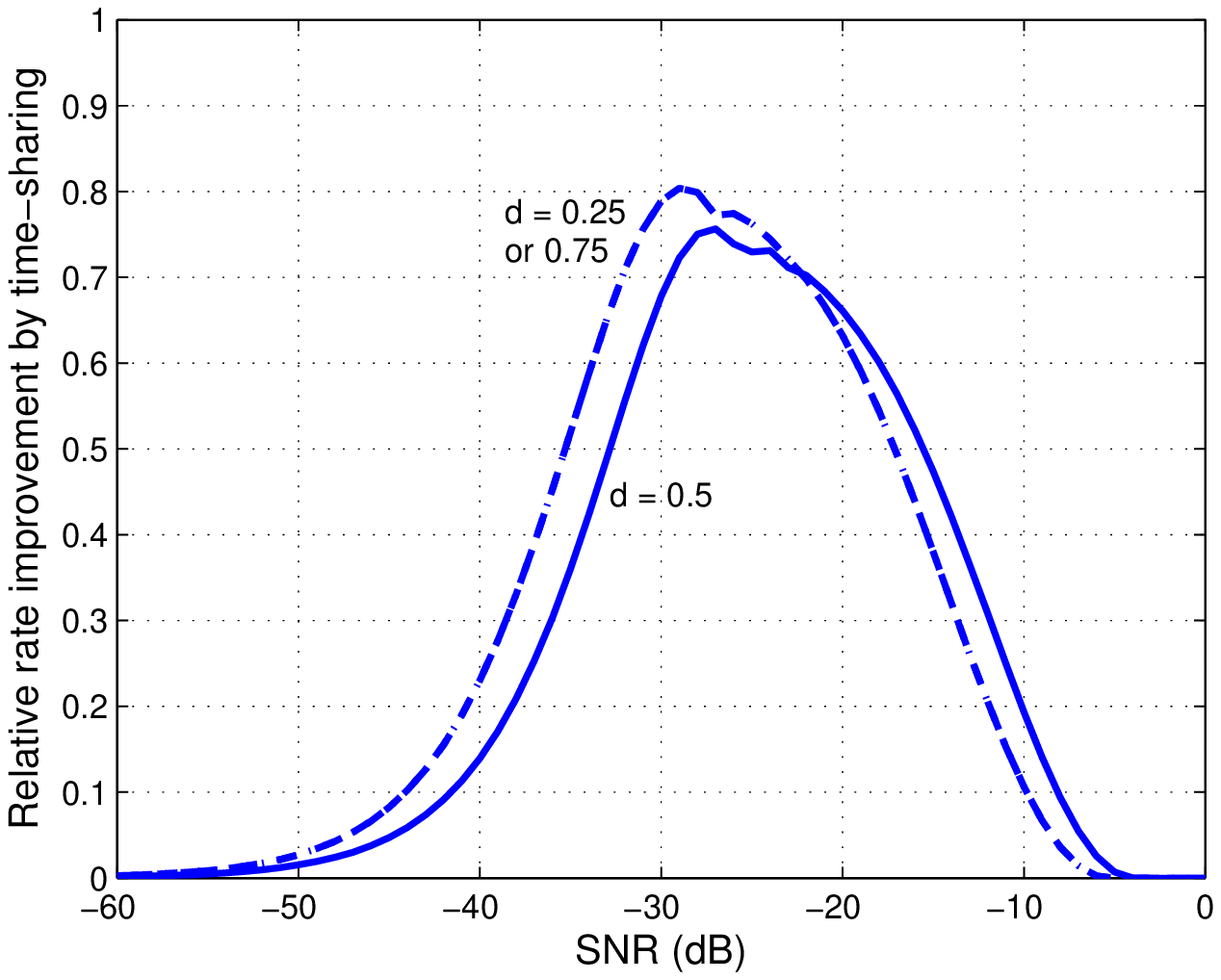}
\caption{The relative rate improvement of time-sharing CF under full-duplex mode.} \label{fig_full_rate}
\end{figure}

\begin{figure}[!t]
\centering
% for double column submission
\includegraphics[width=3.5in]{./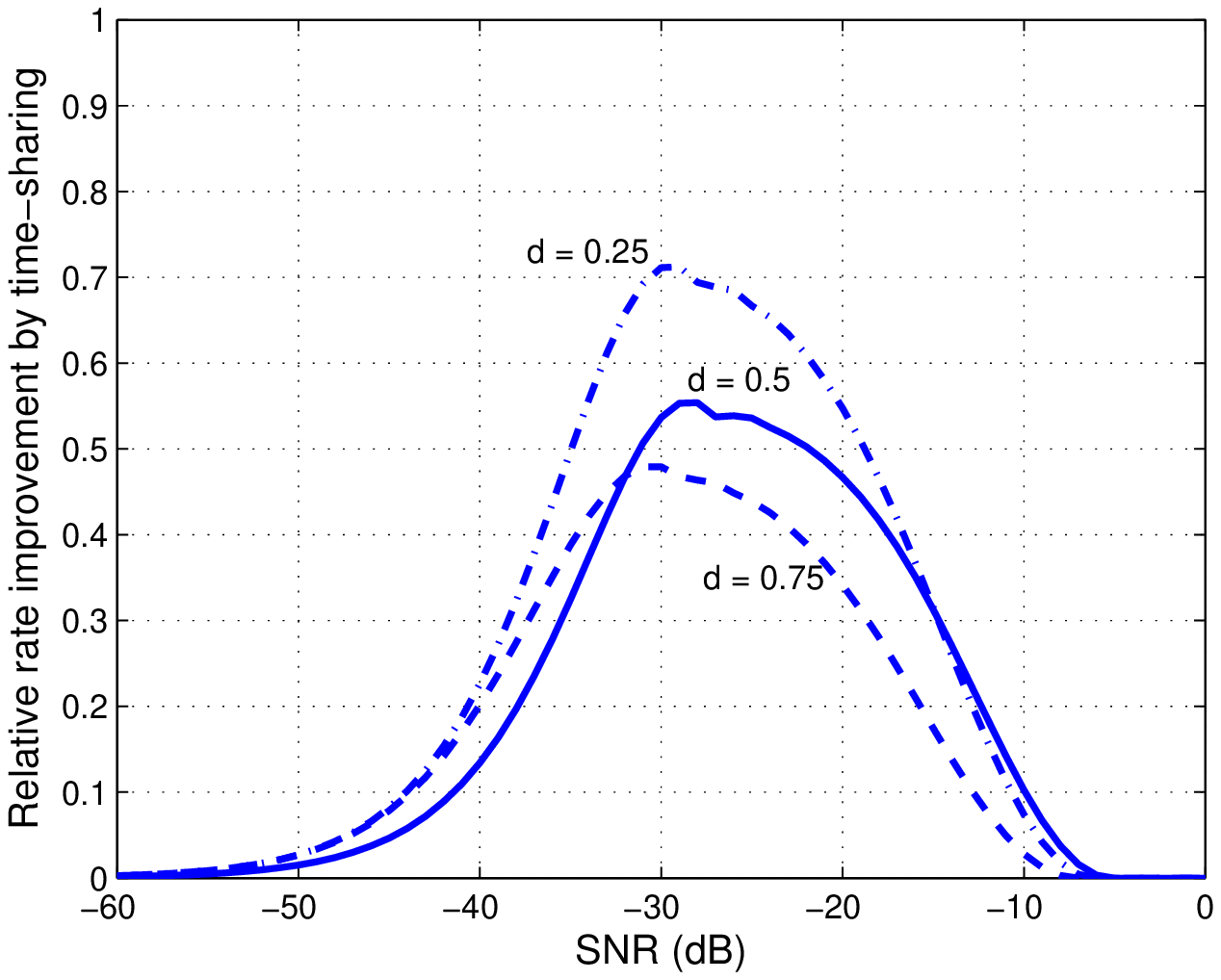}
\caption{The relative rate improvement of time-sharing CF under half-duplex mode.} \label{fig_half_rate}
\end{figure}

\begin{figure}[!t]
\centering
% for double column submission
\includegraphics[width=3.5in]{./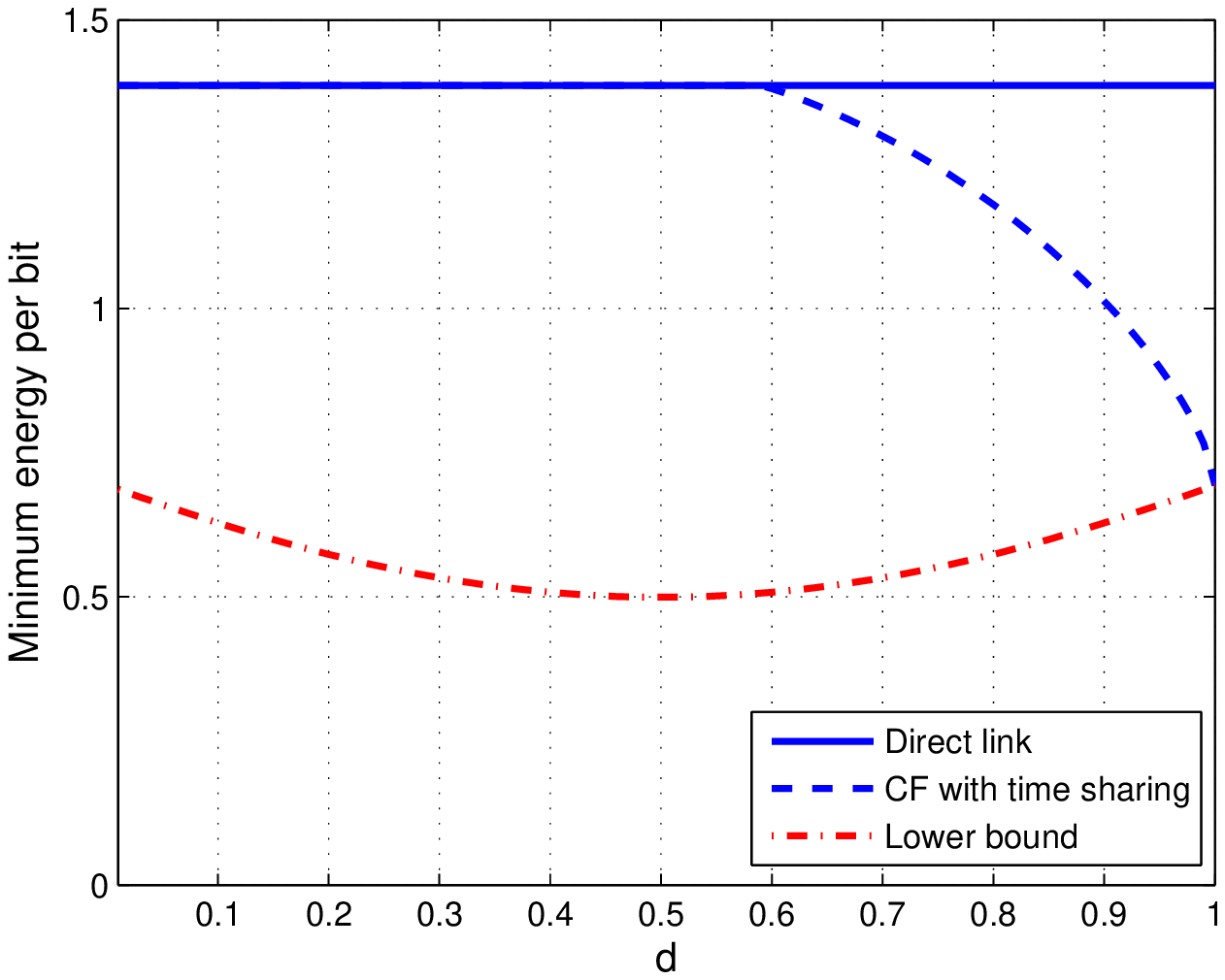}
\caption{The bounds on the minimum energy per bit under full-duplex mode}\label{fig_full_ebn0}
\end{figure}

\begin{figure}[!t]
\centering
% for double column submission
\includegraphics[width=3.5in]{./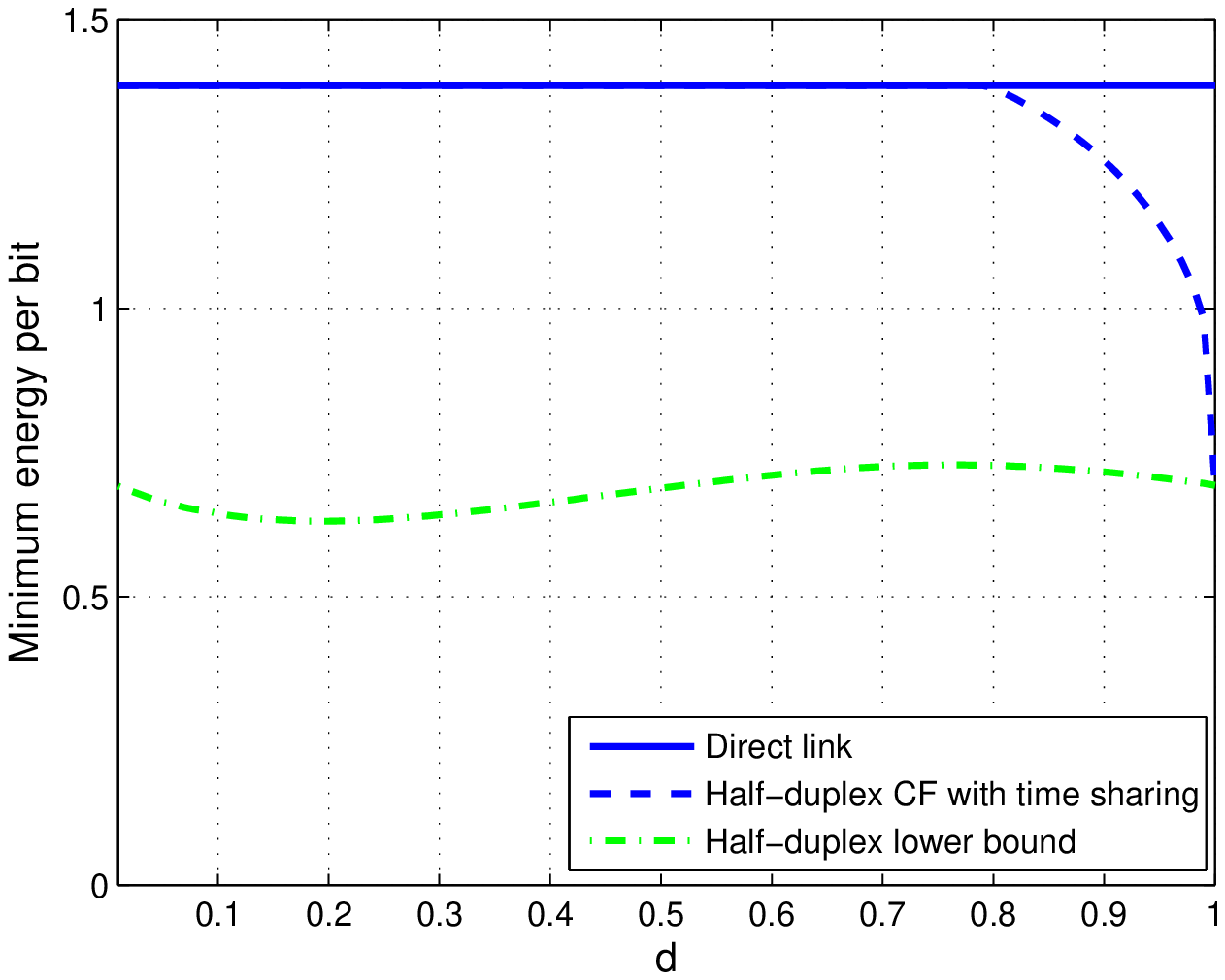}
\caption{The bounds on the minimum energy per bit under half-duplex mode}\label{fig_half_ebn0}
\end{figure}

\end{document}